\newtheorem{theorem}{Proposition}
\pgfplotsset{compat=1.18}
\begin{document}

\title{Decision-Focused Learning for Neural Network-Constrained HVAC Scheduling}

\author{Pietro~Favaro,~\IEEEmembership{Student Member,~IEEE,}
        Jean-François~Toubeau,~\IEEEmembership{Member,~IEEE,}
        François~Vallée,~\IEEEmembership{Member,~IEEE,}
        and~Yury~Dvorkin,~\IEEEmembership{Member,~IEEE,}
\thanks{P. Favaro, J.F. Toubeau, F. Vallée are with the University of Mons, Belgium}
\thanks{Y. Dvorkin is with Johns Hopkins University, Baltimore, USA}
}

\markboth{Accepted in IEEE Trans. Smart Grid}%
{Shell \MakeLowercase{\textit{et al.}}: Bare Demo of IEEEtran.cls for IEEE Journals}


\maketitle

\begin{abstract}
Heating, Ventilation, and Air Conditioning (HVAC) is a major electricity end-use with a substantial potential for providing grid services, such as demand response. {Harnessing this flexibility requires accurate modeling of the thermal dynamics of buildings, a difficult task because nonlinear heat transfer and recurring daily cycles make historical data highly correlated and insufficient to generalize to new weather, occupancy, and control scenarios. This paper presents an HVAC management system formulated as a Mixed Integer Quadratic Program (MIQP), where Neural Network (NN) models of thermal dynamics are embedded as exact mixed-integer linear constraints. Unlike traditional training approaches that minimize prediction errors, we employ Decision-Focused Learning (DFL) to learn the NN parameters with the objective of directly improving the HVAC cost performance. However, the discrete nature of MIQP hinders DFL, as it leads to undefined and discontinuous gradients, thus impeding standard gradient-based training.}{} We leverage Stochastic Smoothing (SS) to enable efficient gradient computation without the need to differentiate the MIQP. Experiments on a realistic five-zone building using a high-fidelity simulator demonstrate that the proposed SS-DFL approach outperforms conventional {identify-then-optimize (i.e., the thermal dynamics model is identified on historical data then used in optimization)}{} and relaxed DFL methods in both cost savings and grid service performance, highlighting its potential for scalable, grid-aware building control.
\end{abstract}

\begin{IEEEkeywords}
Building Energy Management, Differentiable Optimization, Decision-Focused Learning, Demand Response, Energy Management System, Stochastic Smoothing.
\end{IEEEkeywords}

\IEEEpeerreviewmaketitle

\section{Introduction}
\label{sec: intro}

\IEEEPARstart{B}{uildings}
account for over 30\% of global energy use \cite{noauthor_buildings_2023}, making them central to the low-carbon energy transition. Among their energy-intensive components, Heating, Ventilation, and Air Conditioning (HVAC) accounts for around 65\% of total energy consumption in European households \cite{noauthor_energy_2024}, yet it remains a largely untapped source of flexibility \cite{papadaskalopoulos_decentralized_2013, kouzelis_estimation_2015, mocanu_energy_2016}. Controlling HVAC settings can reshape energy consumption patterns \cite{tian_real-time_2021}, improve efficiency \cite{m_yousefi_predictive_2021}, and even support grid stability \cite{kim_experimental_2016, wang_online_2012}. However, unlocking this potential requires understanding and controlling building thermal dynamics, i.e., an intricate problem where physics, data, and optimization must be seamlessly integrated into the decision-making pipeline.

A major opportunity to leverage HVAC flexibility is the participation in the day-ahead electricity market, the main platform for trading electricity between grid actors. In Europe, this market is cleared daily, setting electricity volumes and prices for each hour of the following day \cite{noauthor_current_2015}. This timeline aligns naturally with the slow thermal dynamics of buildings, where temperature adjustments from HVAC decisions and external conditions unfold gradually over time. 
Capturing these effects is crucial to maintain occupants' comfort \cite{m_yousefi_predictive_2021}. Thermal models for buildings generally fall into two main categories: physics-based and data-driven\footnote{
    A comprehensive review of the literature is provided in \cite{drgona_all_2020}, Section~III.
    }.

Physics-based (i.e., white-box) models rely on detailed information about a building’s physical characteristics, including construction materials, insulation properties, and ventilation dynamics. This category includes high-fidelity simulators such as EnergyPlus\cite{crawley_energyplus_2001}.
While these models provide highly detailed representations of buildings \cite{afroz_modeling_2017}, some parameters, such as the materials resistance and capacitance, are inherently uncertain due to factors like aging and variability in the installation process, which can lead to significant errors \cite{dagostino_experimental_2022}. Moreover, it requires expertise in building thermal modeling, which may not be readily available at each building site. But, there also exist simpler models: aggregated (or lumped) Resistance-Capacitance (RC) models. These models are linear and based on circuit analogy between heat transfer and electricity. The parameters of these models can be computed based on the construction materials \cite{kircher_lumped_2015}, or by data-driven regression \cite{f_belic_thermal_2016}.

In contrast, data-driven (i.e., black-box) models rely on statistical or Machine Learning (ML) techniques trained on historical weather and energy consumption data. These models range from simple linear regressions \cite{qiang_improved_2015} to highly complex nonlinear architectures, such as Neural Network (NN) models \cite{y_-j_kim_supervised-learning-based_2020, s_a_nabavi_deep_2021}. In~\cite{drgona_physics-constrained_2021}, physics-informed NNs were employed for control-oriented predictions. However, these NN approaches did not consider day-ahead scheduling, which introduces fundamentally different requirements. \autoref{tab: thermal models} summarizes the thermal modeling approaches.

\newcolumntype{L}[1]{>{\raggedright\arraybackslash}p{#1}}
\begin{table}[thb]
    \centering
    \caption{Comparison of thermal dynamics modeling approaches.}
    \label{tab: thermal models}
    \begin{tabular}{p{1em} L{5.5em} L{4.5em} L{8em} L{4.2em}}
    \toprule
     & \textbf{Model} & \textbf{Complexity} & \textbf{Data Requirement} & \textbf{Ref.} \\
    \midrule
    \multirow{4}{*}{\rotatebox[origin=c]{90}{White box}} 
      & High-fidelity simulator 
      & Very high 
      & Geometry, materials, weather 
      & \cite{crawley_energyplus_2001, afroz_modeling_2017, dagostino_experimental_2022} \\
      & Lumped RC (known R/C)
      & Low 
      & R and C 
      & \cite{kircher_lumped_2015} \\ \hline
    \multirow{5}{*}{\rotatebox[origin=c]{90}{Gray box}} & Lumped RC (parameter fitting) & Moderate 
      & I/O measurements for R/C &  \cite{f_belic_thermal_2016} \\
       & Physics-informed NN & High & Partial physics + data & \cite{drgona_physics-constrained_2021} \\\hline
    \multirow{4}{*}{\rotatebox[origin=c]{90}{Black box}} & NN & Moderate & Historical input–output data & \cite{y_-j_kim_supervised-learning-based_2020, s_a_nabavi_deep_2021} \\
     & Regression models & Low 
      & Historical system data & \cite{qiang_improved_2015} \\
    \bottomrule
    \end{tabular}
\end{table}

NNs can capture complex patterns with a limited manual modeling effort, and will thus be used in this paper. To fully exploit the capabilities of NN models for HVAC control, it is essential to unify modeling and control tasks within a single framework. Such an integrated approach would allow HVAC systems to make energy-efficient planning decisions while maintaining occupants' comfort.
{
A first promising avenue is to model the control policy by a NN that outputs control decisions, allowing it to internalize the building dynamics. Traditionally, such NNs in real-time control are trained using reinforcement learning (RL) \cite{jang_active_2024}, which directly optimizes decisions through interaction with the environment. Alternatively, self-supervised learning can be employed, but, similarly to RL, it often lacks guarantees on solution quality, especially in the presence of hard physical and operational constraints. For example, the method in \cite{donti_dc3_2021} ensures feasibility but not optimality and cannot be applied to discrete problems, while extensions to mixed-integer nonlinear programming fail to guarantee either feasibility or optimality \cite{tang_learning_2025}. Self-supervised methods such as Primal-Dual Learning (PDL) \cite{park_self-supervised_2022} and deep Lagrangian dual networks \cite{fioretto_predicting_2019} improve feasibility by integrating dual information of the underlying optimization. However, PDL relies on iterative dual updates and may struggle with nonconvex or stochastic problems, whereas the approach in \cite{fioretto_predicting_2019}, which predicts solutions to the AC Optimal Power Flow problem, is designed around problem-specific structures, which prevents generalization to settings with uncertainty or combinatorial decision variables.
}
A second promising avenue is provided by NN-constrained optimization, where the learned thermal model, represented by a NN, is embedded directly into mathematical optimization as a set of constraints.
This approach merges the expressiveness of NN with the stability and decision quality of constrained optimization. 
It has been shown that, by using a feedforward NN with Rectified Linear Unit (ReLU), the constraints can be exactly encoded as a set of  Mixed-Integer Linear (MIL) equations \cite{murzakhanov_neural_2021}.

In this paper, we formulate the day-ahead HVAC Management System (MS) as an NN-constrained optimization problem. The NN, which models the building’s thermal dynamics, is encoded precisely as a set of mixed-integer linear constraints within the optimization problem, enabling the optimization model to incorporate learned dynamics while ensuring decision quality and feasibility.

\subsection{Literature Review}
Although NNs offer strong modeling capabilities, their inherent approximation errors can propagate into suboptimal energy management decisions. Traditional ML models are trained to minimize statistical metrics such as Mean Squared Error (MSE), thus overlooking how the learned model affects downstream control decisions. In contrast, Decision-Focused Learning (DFL) aligns ML training with optimization objectives to improve decision quality \cite{li_decision-oriented_2024}. 
DFL typically relies on gradient-based training methods. The core challenge in this approach is two-fold: (\textit{i}) computing the sensitivity of the optimization problem’s solution with respect to its input parameters, and (\textit{ii}) ensuring that the resulting gradients are meaningful and informative for learning \cite{mandi_decision-focused_2023}.

A foundational contribution in DFL addressed the differentiation of unconstrained problems. In \cite{kao_directed_2009}, the authors differentiated an unconstrained Quadratic Problem (QP) to train a ML model that predicts the uncertain parameters of the QP. Donti et al. extended the framework to constrained QP \cite{donti_task-based_2017} by relaxing the constraints in the objective function to compute the gradient.
Agrawal et al. \cite{agrawal_learning_2020} enable differentiation through constrained convex programs using self-homogeneous embeddings of conic problems, allowing learning of controller parameters in the objective under known, differentiable dynamics and observable system behavior.
{Constrained convex optimization can also be differentiated via implicit differentiation of its Karush-Kuhn-Tucker conditions \cite{amos_optnet_2017}. This framework has been applied to learn the parameters of building thermal dynamics models. In \cite{chen_gnu-rl_2019}, Chen et al. propose a differentiable Model Predictive Control (MPC) policy that jointly learns the thermal dynamics and optimizes HVAC operations. Even though pioneering, their approach relies on linear thermal models that yield convex optimization problems, limiting its applicability to more realistic, non-convex building dynamics. Cui et al. extended this approach to include a forecaster upstream of the convex optimization problem~\cite{cui_decision-oriented_2025}. They jointly learn the linear thermal model and a forecaster for its residuals.} Other applications include forecasting electricity prices \cite{wahdany_more_2023} and optimizing the allocation of flexibility between transmission and distribution grid assets \cite{ortmann_tuning_2024}.

In this paper, to extend DFL to the scheduling of buildings with a NN modeling the thermal dynamics, we need to differentiate through a Mixed-Integer Problem (MIP), which is particularly difficult due to the presence of (discrete) integer decision variables. As a result, the gradient of the MIP solution with respect to the MIP parameters is typically either zero or undefined, making it incompatible with standard gradient-based learning methods.
One possible approach is to replace the MIP with its continuous relaxation, allowing the use of the existing DFL methods \cite{agrawal_learning_2020}. While relaxing a MIP to its continuous counterpart reduces computational complexity, it alters the NN representation, i.e., from capturing a nonlinear hypersurface to approximating it as a polytope, compromising modeling fidelity. Another approach generates cuts to obtain a linear program that has the same solution as the original mixed-integer formulation \cite{ferber_mipaal_2020}. However, in addition to the implementation difficulty, this method is extremely burdensome, since a new set of cuts must be generated for each training instance.

To tackle these challenges and thus enable DFL for the NN-constrained HVAC planning problem, we propose to leverage Stochastic Smoothing (SS). This technique introduces random perturbations to the uncertain parameters. {Early approaches use this perturbation to smooth the "max" operator limiting its applicability to problems in which the unknown parameters are in the linear objective function \cite{tang_pyepo_2023}. We go further by extracting informative gradients from the task-specific loss using the score function of the REINFORCE algorithm \cite{silvestri_score_2024}.} 
Unlike the aforementioned methods, SS does not make any assumption about the optimization problem class, the position of the uncertain parameters (whether in the objective function and/or the constraints), and the specific form of the loss function. Therefore, SS obtains a gradient from ex-post signals directly such as the true realized cost or constraint deviation, without requiring differentiability. SS ensures high versatility of the proposed framework and is highly suitable for combinatorial optimization involving a non-differentiable dynamical system.

\subsection{Contributions}
\noindent The main contributions of this paper can be summarized as:

\begin{enumerate}
    \item We train a NN that models the building’s thermal dynamics using DFL to optimize the decision quality rather than minimize a task-agnostic statistical metric.
    We formulate an HVAC day-ahead management as a Mixed-Integer Linear Program (MILP) in which the building’s thermal dynamics are modeled by a NN and embedded directly as optimization constraints. 
    To enable effective and robust gradient-based training of the NN, we (\textit{i}) reformulate the MILP problem as an MIQP to avoid infeasibility during learning, (\textit{ii}) solve the MIQP with random perturbations applied to unknown NN parameters to encourage exploration, (\textit{iii}) estimate gradients using a score-function method that directly relates learning with decision quality.
    \item We embed a piecewise linear NN in the HVAC management system to obtain an expressive and tractable nonlinear model of the building’s thermal dynamics. We reformulate each ReLU activation as a set of Mixed-Integer Linear (MIL) constraints, providing solution quality guarantees via a Mixed-Integer Programming (MIP) optimality gap. This approach yields a trade-off between modeling accuracy of the dynamics and computational efficiency.
    \item 
    Because reformulating NNs with ReLU activations in MILP requires Big-M constants, we improve the standard formulation by adaptively tightening them.
    Specifically, instead of relying on fixed Big-M bounds, we dynamically adjust the feasible input intervals: when an input value is known in advance (i.e., it is a deterministic parameter), its interval is reduced to a single value. This adaptive strategy produces a tighter formulation than existing fixed Big-M methods \cite{ceccon_omlt_2022, zhang_augmenting_2024}, leading to more efficient optimization without sacrificing correctness. The benefit of this approach is demonstrated through both theoretical analysis (Section~\ref{sub: tightness of the NN}) and empirical validation (Section~\ref{sub: tightness results}).
\end{enumerate}

Our framework bridges the gap between ML, constrained optimization, and real-world energy management, paving the way for more intelligent and efficient HVAC control strategies. We show the effectiveness of our approach on a realistic five-zone office building located in Denver, USA. Moreover, we compare the performance of multiple thermal models: various NN architectures and lumped RC model trained on historical data, or via traditional DFL methods.

\subsection{Outline}
In Section \ref{sec: model and methods}, we describe the HVAC MS model, including the reformulation of NN as a set of mixed-integer linear constraints, and the derivation of the resulting MILP problem. In Section \ref{sec: case study}, we introduce the case study and evaluate the performance of thermodynamic models of increasing complexity, from the RC linear model to the NN model. We compare the effectiveness of training these models in ITO and DFL fashions. In addition, we analyze the impact of the proposed tight formulation of ReLU, the standard deviation of the noise, and the number of samples on our DFL strategy. Finally, we summarize the key findings and outline the potential directions for future research in Section \ref{sec: ccl}.

%
\section{Model and Methods}
\label{sec: model and methods}

\subsection{HVAC Management System}
\label{sub: HVAC Management System}

\begin{figure*}[t]
    \centering
    \includegraphics[width=0.88\linewidth]{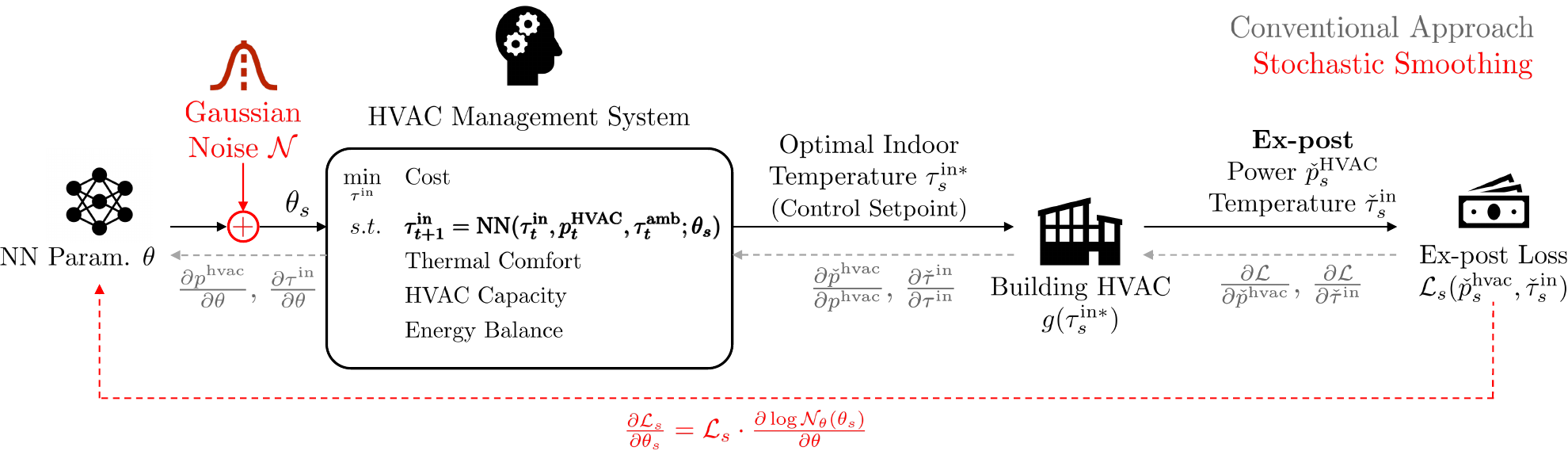}
    \caption{Comparison of the proposed stochastic smoothing DFL pipeline (red) for HVAC management system with conventional DFL approaches (gray).}
    \label{fig: DFL pipeline}
\end{figure*}
The objective of day-ahead HVAC MS is to minimize the following day's operational cost on behalf of the building manager. This scheduling task is challenging, as it must account for nonlinear building thermodynamics, efficient energy use, and sufficient thermal comfort for the occupants. In general, the easiest way to control the HVAC system is through thermostats, which are widely available in buildings. Therefore, the decision variable of the scheduling problem is the matrix of indoor temperature setpoints for each time step and thermal zone\footnote{A thermal zone in a building is a space or collection of spaces with similar space-conditioning requirements, typically sharing the same heating and cooling setpoint and controlled by a single thermostat or thermal control device.}. In the following model to optimize the day-ahead HVAC operation, we assume that the electricity cost \( \lambda_t^{\rm i} \) is lower than the feed-in tariff \(\lambda_t^{\rm e}\) for each time step.

\noindent The objective function \eqref{eq: obj fct} of the day-ahead HVAC MS is:
\begin{equation}
    \label{eq: obj fct}
    \min_{\tau^{\rm in}} \quad \underbrace{p^{\rm d} \lambda^{\rm d}}_\text{(\textit{i})} + \sum_{t=0}^{T} \left( \underbrace{p_t^{\rm i} \lambda_t^{\rm i} - p_t^{\rm e} \lambda_t^{\rm e}}_\text{(\textit{ii})} \right) \Delta t
\end{equation}
where we minimize:
\renewcommand{\labelenumi}{(\textit{\roman{enumi}})}
\begin{enumerate}
    \item the peak demand cost by applying a charge \(\lambda^{\rm d}\) to the daily power consumption peak \(p^{\rm d}\);
    \item the electricity cost aggregated at the building level which is, at each time step, the difference between the cost of consuming grid electricity \(\lambda_t^{\rm i} p_t^{\rm i}\) and the revenue from power injection into the grid \(\lambda_t^{\rm e} p_t^{\rm e}\).
\end{enumerate}

The optimization is constrained in \eqref{eq: NN for thermodynamics} by the building thermal dynamics from time steps \(t\) to \(t+1\). The relationship is modeled by a NN with parameters \(\theta\). The vector of zonal indoor temperatures \(\tau_{t+1}^{\rm in}\) is obtained based on the input from the previous time step \(t\): the vector of zonal indoor temperatures \(\tau_{t}^{\rm in} \in \mathbb{R}^Z\) where \(Z\) is the number of zones, the vector of electrical power for zonal heating \(p_{t}^{\rm h} \in \mathbb{R}^Z\), the vector of electrical power for zonal cooling \(p_{t}^{\rm c}  \in \mathbb{R}^Z\), and ambient temperature \(\tau_{t}^{\rm amb}  \in \mathbb{R}\). 
\begin{equation}
    \label{eq: NN for thermodynamics}
    \tau_{t+1}^{\rm in} = {\rm NN}(\tau_{t}^{\rm in}, p_{t}^{\rm h}, p_{t}^{\rm c}, \tau_{t}^{\rm amb}; \theta) \quad \forall t
\end{equation}

\noindent The initial state of the building, i.e., the set of initial zonal indoor temperatures, are given for all zones \(z \in \mathcal{Z}\) by:
\begin{equation}
    \label{eq: initial indoor temperature condition}
    \tau_{0, z}^{\rm in} = T_{0, z}^{\rm in}.
\end{equation}

\noindent In addition, zonal indoor temperature \(\tau_{t, z}^{\rm in}\) is restricted to lie within comfort limits \(\underline{T}_{t, z}^{\rm in}\) and \(\overline{T}_{t, z}^{\rm in}\).
\begin{equation}
    \label{eq: indoor temperature bounds}
    \underline{T}_{t, z}^{\rm in} \leq \tau_{t, z}^{\rm in} \leq \overline{T}_{t, z}^{\rm in}
\end{equation}

\noindent Heating and cooling power consumptions are positive variables bounded by the HVAC capacities \(\overline{P}_{z}^{\rm c}\) and \(\overline{P}_{z}^{\rm h}\):
\begin{flalign}
    \label{eq: zonal hvac cooling capacity}
    0 \leq p_{t, z}^{\rm c} \leq \overline{P}_{z}^{\rm c}, \quad \forall t, z; \\
    \label{eq: zonal hvac heating capacity}
    0 \leq p_{t, z}^{\rm h}  \leq \overline{P}_{z}^{\rm h}, \quad \forall t, z.
\end{flalign}

\noindent We define the zonal HVAC electrical power \(p_{t, z}^{\rm hvac}\) as the sum of the powers for cooling and heating \eqref{eq: hvac power}. The resulting energy balance is ensured by \eqref{eq: energy balance}, in which the net power consumption of the building comprises the HVAC power consumption \(p_{t, z}^{\rm hvac}\) and the non-dispatchable load \(P_t^{\rm nd}\), minus the building’s on-site power generation \(P_t^{\rm gen}\), e.g., from rooftop photovoltaic panels.
\begin{flalign}
\label{eq: hvac power}
    p_{t, z}^{\rm hvac} &= \ p_{t, z}^{\rm h} + \ p_{t, z}^{\rm c} \quad \forall t, z\\
\label{eq: energy balance}
    p_t^{\rm i} - p_t^{\rm e} &= \sum_{z=0}^{Z} p_{t, z}^{\rm hvac} + P_t^{\rm nd} - P_t^{\rm gen} \quad \forall t
\end{flalign}

\noindent The peak power demand \(p^{\rm d}\) is the maximum power consumption (or injection) from the grid over the day. The peak demand is greater than or equal to all power exchanges \( p_t^{\rm i} + p_t^{\rm e} \) across the day \eqref{eq: peak demand}. While conceptually equivalent to defining \(p^{\rm d} = \max {\{p_t^{\rm i}, p_t^{\rm e} \mid t \in \mathcal{T} \}}\), constraint \eqref{eq: peak demand} avoids introducing a bilevel structure into the optimization problem, preserving tractability.
\begin{equation}
\label{eq: peak demand}
    p^{\rm d} \geq p_t^{\rm i}  + p_t^{\rm e} \quad \forall t
\end{equation}

\noindent At all times, power imported from or exported to the grid must comply with the line capacity \(\overline{P}^{\rm l}\) of the connection between the building and the grid:
\begin{flalign}
\label{eq: import line capacity}
    p_t^{\rm i} \leq \overline{P}^{\rm l} \quad \forall t, \\
\label{eq: export line capacity}
    p_t^{\rm e} \leq \overline{P}^{\rm l} \quad \forall t.
\end{flalign}

\noindent Finally, all power imports and exports must be positive:
\begin{equation}
\label{eq: positivity}
    p_t^{\rm i},\ p_t^{\rm e} \geq 0 \qquad \forall t.
\end{equation}

\subsection{Decision-Focused Learning}
\label{sub: DFL}

\subsubsection{Problem Formulation}
Our objective is to learn the NN parameters \(\theta\) (i.e., weights and biases) of constraint \eqref{eq: NN for thermodynamics}. Instead of training \(\theta\) in a task-agnostic manner on historical data, we aim to learn \(\theta\) to directly improve HVAC management decisions as shown by \autoref{fig: DFL pipeline}.

To formalize this DFL problem, we introduce \(r\), a vector containing all stochastic parameters of the HVAC MS (e.g., weather conditions)  excluding \(\theta\). We assume \(r\) follows the known training distribution \(\mathcal{R}\). The goal is to minimize the task loss \(\mathcal{L}\), which reflects the true operational goal of the HVAC MS. This naturally leads to a bilevel optimization problem, whose solution yields the optimal values of $\theta$:
\begin{align}
    \label{eq: bilevel start}
    \min_\theta \ & \mathbb{E}_{r \sim \mathcal{R}} [\mathcal{L}(\check{p}^{\rm hvac}, \check{\tau}^{\rm in})]\\
    \label{eq: simulator}
    & {\rm s.t.} \ \check{p}^{\rm hvac}, \check{\tau}^{\rm in} = g(\tau^{\rm in*})\\
    \label{eq: inner start}
    & \quad \ \ \min f(p^{\rm hvac}, \tau^{\rm in})\\
    \label{eq: bilevel end}
    & \qquad \qquad {\rm s.t.} \ \eqref{eq: NN for thermodynamics} - \eqref{eq: positivity}
\end{align}

The inner problem \eqref{eq: inner start}-\eqref{eq: bilevel end} is the HVAC MS, including the NN formulation, which is thus a MILP. The outer problem aims to find \(\theta\) by minimizing the expected task loss over the distribution of input parameters \(r\) \eqref{eq: bilevel start}. Ideally, the task loss is the ex-post value of the decisions. Thus, it depends on the realized ex-post measurements of the HVAC power \(\check{p}^{\rm hvac}\), and  indoor temperature \(\check{\tau}^{\rm in}\). These ex-post measurements are obtained by controlling the HVAC system with optimal indoor temperature profiles \(\tau^{\rm in*}\) \eqref{eq: simulator}. In \eqref{eq: simulator}, \(g\) is the mapping between \(\tau^{\rm in*}\) and the ex-post measures \(\check{p}^{\rm hvac}\) and \(\check{\tau}^{\rm in}\). The function \(g\) models HVAC actuation, building response, and sensing.

\subsubsection{Gradient Computation}
\label{ssub: gradient computation}
Problem \eqref{eq: bilevel start}-\eqref{eq: bilevel end} is intractable and realistically large instances cannot be handled by off-the-shelf solvers. Therefore, we aim at learning \(\theta\) by gradient descent. The backpropagation step of the conventional approach applies the chain rule, but for tractability this requires \(g\) to be differentiable. In practice, real systems, and even some building simulators, are not differentiable.

To address this issue, Stochastic Smoothing (SS) aims to learn directly from an ex-post loss signal, analogous to the reward in reinforcement learning (\autoref{fig: DFL pipeline}). SS does not make any assumption about the form of the HVAC MS, the differentiability of \(g\), and the structure of the ex-post loss. To that end, instead of producing a point estimate of \(\theta\), SS models it as a distribution, e.g., a Gaussian \(\mathcal{N}(\theta, \sigma)\). Because of the stochasticity in \(\theta\), the loss \(\mathcal{L}\) in \eqref{eq: bilevel start} becomes an expectation:
\begin{equation}
    \mathcal{L} = \mathbb{E}_{\theta_s \sim \mathcal{N(\theta, \sigma)}} \left[ \mathcal{L}_s(\check{p}_s^{\rm hvac}, \check{\tau}_s^{\rm in})\right].
\end{equation}

By modeling probabilistic outputs, this approach avoids uninformative gradients that plague traditional deterministic methods. Even when the original gradient (e.g., from a LP) would be zero, taking the expectation over a smoothed distribution can yield informative, non-zero gradient signals. However, computing these gradients analytically is challenging.

To train the model, we thus employ score-function gradient estimation, a technique closely related to the \text{REINFORCE} algorithm from reinforcement learning. The approximation of the gradient becomes \cite{silvestri_score_2024}:
\begin{equation}
    \frac{\partial \mathcal{L}}{\partial \theta} = \mathbb{E}_{\theta_s \sim \mathcal{N(\theta, \sigma)}} \left[ \mathcal{L}_s \frac{\partial \log \mathcal{N}_\theta(\theta_s)}{\partial \theta}\right].
\end{equation}

In practice, the expectation of the loss \(\mathcal{L}\) is approximated using Monte Carlo with \(S\) samples \cite{mohamed_monte_2020}:
\begin{equation}
    \frac{\partial \mathcal{L}}{\partial \theta} = \frac{1}{S} \sum_{s=1}^S \left[ \mathcal{L}_s \frac{\partial \log \mathcal{N}_\theta(\theta_s)}{\partial \theta}\right].
\end{equation}

Although SS can accommodate any type of optimization problem, we reformulate the MILP as a Mixed-Integer Quadratic Program (MIQP) by relaxing the indoor temperature constraint \eqref{eq: indoor temperature bounds} and incorporating it as a quadratic penalty in the objective function \eqref{eq: obj fct relaxed}.

Reformulating the problem as a MIQP is not strictly required to obtain meaningful gradients, since SS is agnostic to the optimization structure, but it helps stabilize the learning process. By penalizing constraint violations in the objective (in the MIQP) rather than enforcing them strictly (in the original MILP), the MIQP approach prevents infeasibility during training, thereby ensuring smoother and more robust model updates. Moreover, the hard constraints can be enforced at test time, if desired.

The quadratic term (\textit{iii}) penalizes deviation of zonal indoor temperature \(\tau_{t, z}^{\rm in}\) from the target temperature \(T_{t, z}^{\rm tgt}\), reflecting the loss of thermal comfort of the occupants. The weight \(o_{t, z}\) should be designed to reflect the occupancy of the zone, i.e., higher occupancy implies a greater importance of thermal comfort and thus a stronger penalty.
\begin{flalign}
    \label{eq: obj fct relaxed}
    \min_{\tau^{\rm in}}
    p^{\rm d} \lambda^{\rm d} + \sum_{t=0}^{T} ( p_t^{\rm i} \lambda_t^{\rm i} - p_t^{\rm e} \lambda_t^{\rm e} + \underbrace{ \sum_{z=0}^{Z} o_{t, z} (\tau_{t, z}^{\rm in} - T_{t, z}^{\rm tgt})^2}_\text{(\textit{iii})})
\end{flalign}

\noindent The resulting algorithm is detailed in Algorithm \ref{alg: DFL pipeline}.

\begin{algorithm}[H]
    \caption{Algorithm for decision-focused learning of the HVAC management system.}
    \label{alg: DFL pipeline}
    \begin{algorithmic}[1]
        \State \textbf{Input:}
        \State Database of exogenous parameters \( R \)
        \State Initial NN parameters \(\theta_0\)

        \For{each epoch}
        \For{each parameter sample \(r\) in \(R\)}
        \For{each sample s}
            \State \textbf{Day-Ahead Stage:}
            \State Sample \(\theta_s \sim \mathcal{N}(\theta, \sigma)\)
            \State Solve MIQP HVAC MS problem
            
            \State \textbf{Simulator or Real Building:}
            \State Thermostat control with \(\tau_s^{\rm in*}\) as setpoints
            \State Get ex-post power \( \check{p}_s^{\rm hvac} \) and temperature \(\check{\tau}_s^{\rm in*}\)
    
            \State \textbf{Ex-Post Analysis:}
            \State Compute loss function \( \mathcal{L}_s(\check{p}_s^{\rm hvac*}, \check{\tau}_s^{\rm in*}) \)

            \EndFor
            
            \State \textbf{Backward Pass:}
            \State Compute gradient \( \frac{\partial \mathcal{L}}{\partial \theta} \approx \frac{1}{S} \sum_{s=1}^S \mathcal{L}_s \frac{\partial \log \mathcal{N}_\theta(\theta_s)}{\partial \theta} \)
            \State Update NN parameters \(\theta_{n+1} \leftarrow \theta_n - \alpha \frac{\partial \mathcal{L}}{\partial \theta}\)
        \EndFor
        \EndFor
    \end{algorithmic}
\end{algorithm}

\subsubsection{Task Loss}
The task loss \(\mathcal{L}\) aims to evaluate the quality of the decisions made by the HVAC MS. The ideal task loss is the exact ex-post value of the decisions, which includes both the cost and the thermal comfort. Here, we align our loss \(\mathcal{L}\) with the HVAC MS objective \eqref{eq: obj fct relaxed}. We name the loss \textit{Expost+} and define it as:
\begin{flalign}
    \label{eq: task loss}
    \mathcal{L} = \sum_{t=0}^{T} (\underbrace{\check{p}^{\rm hvac}_t \check{\lambda}_t + \sum_{z=0}^{Z} o_{t, z} (\Delta\check{\tau}_{t, z}^{\rm in})^2}_{\text{Ex-post cost}}) + 
    \underbrace{\vphantom{\sum_{z=0}^{Z} o_{t, z}}{\rm MSE}(\check{C}-C)}_{\text{MSE Power Cost}},
\end{flalign}
where \(\Delta\check{\tau}_{t, z}^{\rm in} = \check{\tau}_{t, z}^{\rm in} - T_{t, z}^{\rm tgt}\) and \({\rm MSE}(\check{C}-C)=\frac{1}{T}\sum_{t=0}^{T}(\check{p}^{\rm hvac} \check{\lambda}_t - p^{\rm hvac} \lambda_t)^2\).
The ex-post price \(\check{\lambda}_t\) depends on the ex-post import and export powers \(\check{p}_t^{\rm i}\) and \(\check{p}_t^{\rm e}\):
\begin{equation}
    \check{\lambda}_t = \left\{
    \begin{array}{ll}
        \lambda_t^{\rm i} \ {\rm if} \ \check{p}_t^{\rm e} = 0 \ {\rm and} \ \check{p}_t^{\rm i} \neq \check{p}_t^{\rm d}, \\
        \lambda_t^{\rm e} \ {\rm if} \ \check{p}_t^{\rm i} = 0 \ {\rm and} \ \check{p}_t^{\rm e} \neq \check{p}_t^{\rm d}, \\
        \lambda_t^{\rm i} + \lambda_t^{\rm d} \ {\rm if} \ \check{p}_t^{\rm i} = \check{p}_t^{\rm d}, \\
        \lambda_t^{\rm e} + \lambda_t^{\rm d} \ {\rm if} \ \check{p}_t^{\rm e} = \check{p}_t^{\rm d}.
    \end{array}
\right.
\end{equation}

\subsection{Neural Network MILP Formulation}
Constraint \eqref{eq: NN for thermodynamics} of the model presented in Section \ref{sub: HVAC Management System} represents the building thermal dynamics over a time interval. 
We propose learning building thermal dynamics in a data-driven fashion by leveraging piecewise linear NNs, which are obtained by designing NNs with only piecewise linear activation functions.
Specifically, we use ReLU activation functions, which are made up of two linear pieces. ReLU speeds up training and reaches greater accuracy than other activation functions for deep \cite{baldassi_properties_2019} and sparse NNs \cite{glorot_deep_2011}, but ReLU-based NNs may be nonconvex \cite{r_balestriero_mad_2021}. The ReLU function is defined as:
\begin{equation}\label{RELU}
    y={\rm max}(0, \hat y).
\end{equation}

However, embedding \eqref{RELU} directly into the larger HVAC MS optimization makes the overall problem intractable, as it requires solving a nested optimization at each ReLU. Consequently, each ReLU neuron \(n \in \mathcal{N}\) is reformulated by a binary variable \(\sigma_n\) \eqref{eq: binary relu}, two continuous variables \(y_n\) and \(\hat y_n\) \eqref{eq: y and y hat}, and a set of constraints. The complexity of the resulting optimization problem scales with the number of neurons \(N\).
\begin{flalign}
    \label{eq: binary relu}
        \sigma_{n} \in \{ 0,1 \} \\
    \label{eq: y and y hat}
        y_{n}, \hat{y}_{n} \in \mathbb{R}
\end{flalign}

\begin{figure}[t]
    \centering
    \begin{tikzpicture}[
        node distance=1cm and 1.5cm,
        every node/.style={},
        align=center]

        \def\layersepshort{2.3cm}
        \def\layerseplong{3cm}
        \def\neuronset{1.2cm}
        \def\inputx{0}
        \def\Ooney{1.8*\neuronset}
        \def\Ony{-0.35*\neuronset}
        \def\ONy{-1.6*\neuronset}

        \fill[rounded corners, fill=white!85!cyan] (-0.3*\layersepshort, 2.5*\neuronset) rectangle (0.27*\layersepshort, -2*\neuronset) {};
        \node[draw=none] at (0, 2.2*\neuronset) {Input};
        \node[draw=none, minimum size=0.5cm] (I1) at (\inputx, \Ooney) {\(x_1\)};
        \node[draw=none] at (\inputx, 0.5*\Ony+0.5*\Ooney) {\(\vdots\)};
        \node[draw=none, minimum size=0.5cm] (I2) at (\inputx, \Ony) {\(x_i\)};
        \node[draw=none] at (\inputx, 0.5*\Ony+0.5*\ONy) {\(\vdots\)};
        \node[draw=none, minimum size=0.5cm] (I3) at (\inputx, \ONy) {\(x_I\)};

        \fill[rounded corners, fill=white!85!orange] (0.3*\layersepshort, 2.5*\neuronset) rectangle (1.6*\layersepshort+\layerseplong, -2*\neuronset) {};
        \node[draw=none] at (\layersepshort+0.5*\layerseplong, 2.2*\neuronset) {ReLU Layer};

        \fill[rounded corners, fill=white!65!orange] (0.35*\layersepshort, 1.4*\neuronset) rectangle (1.55*\layersepshort+\layerseplong, -1.4*\neuronset) {};
        \node[draw=none] at (\layersepshort+0.5*\layerseplong, 1.15*\neuronset) {\small ReLU Neuron};

        \node[draw=none] at (\layersepshort+0.5*\layerseplong, 1.8*\neuronset) {\(\vdots\)};
        \node[draw=none] at (\layersepshort+0.5*\layerseplong, -1.6*\neuronset) {\(\vdots\)};

        \node[draw=none] at (1*\layersepshort, 0.15*\neuronset) {\small Preactivation};
        \node[draw, rectangle, label=above:{}] (P1) at (1*\layersepshort, -0.35*\neuronset) {\(\sum_i w_{n, i} x_i + b_n\)};
        \draw[->] (I1) -- (P1.north west);
        \draw[->] (I2) -- (P1.west);
        \draw[->] (I3) -- (P1.south west);

        \node[draw=none] at (\layersepshort+\layerseplong, 0.75*\neuronset) {\small ReLU};
        \node[draw, circle, minimum size = 2cm, clip] (A1) at (\layersepshort+\layerseplong, -0.35*\neuronset) {};
        \draw[->] (P1) -- (A1) node[midway,above,draw=none]{\(\hat y_n\)};
        \begin{scope}[shift={(\layersepshort+\layerseplong-0.84cm, -0.87*\neuronset)}, scale=0.24]
        \begin{axis}[
        axis equal,
        axis lines=middle, 
        axis line style={line width=2.5pt}, 
        xlabel={\(\hat y\)}, ylabel={\(y\)},
        xlabel style={font=\fontsize{30pt}{30pt}\selectfont}, 
        ylabel style={font=\fontsize{30pt}{30pt}\selectfont},
        samples=100, 
        domain=-2.2:2.2, 
        ymin=-0.5, ymax=2.5, 
        xticklabel={\empty}, 
        yticklabel={\empty}, 
        enlargelimits=false, 
        grid=none, 
        ]
        \addplot[line width=3pt,blue] {max(0,x)} node[above left,pos=1] {};
        \end{axis}
        \end{scope}

        \node[draw=none] at (1*\layersepshort, -1.2*\neuronset) {\small \(w_{n}, b_n \in \theta\)};
        
        \fill[rounded corners, fill=white!85!green] (1.63*\layersepshort+\layerseplong, 2.5*\neuronset) rectangle (2.2*\layersepshort+\layerseplong, -2*\neuronset) {};
        \node[draw=none] at (1.92*\layersepshort+\layerseplong, 2.2*\neuronset) {Output};
        \node[draw=none, minimum size=0.5cm] (O1) at (1.92*\layersepshort+\layerseplong, \Ooney) {\(y_1\)};
        \node[draw=none] at (1.92*\layersepshort+\layerseplong, 0.5*\Ony+0.5*\Ooney) {\(\vdots\)};
        \node[draw=none, minimum size=0.5cm] (O2) at (1.92*\layersepshort+\layerseplong, \Ony) {\(y_n\)};
        \node[draw=none] at (1.92*\layersepshort+\layerseplong, 0.5*\Ony+0.5*\ONy) {\(\vdots\)};
        \node[draw=none, minimum size=0.5cm] (O3) at (1.92*\layersepshort+\layerseplong, \ONy) {\(y_N\)};

        \draw[->] (2.74*\layersepshort, \Ooney) -- (O1);
        \draw[->] (A1) -- (O2);
        \draw[->] (2.74*\layersepshort, \ONy) -- (O3);

    \end{tikzpicture}
    \caption{Layer of neurons with Rectified Linear Unit (ReLU) activation function. 
    The parameters are the weight \(w_i\) and the bias \(b\).}
    \label{fig: relu neuron}
\end{figure}
The first step, as shown in \autoref{fig: relu neuron}, is to compute the preactivation function \eqref{eq: preactivation} of the neuron $n$ based on the input \(x\), the weights \(w_{n}\), and the bias \(b_{n}\). For \(I\) inputs, \(x \in \mathbb{R}^I\), \(w_{n} \in \mathbb{R}^{I}\), and \(b_{n} \in \mathbb{R}\). The preactivation value \(\hat y_n\) is a scalar.
\begin{equation}
    \label{eq: preactivation}
        \hat{y}_{n} = \sum_{i=1}^I w_{n,i} x_{n, i} + b_n \quad \forall n
\end{equation}

The second step is to choose a suitable formulation of the ReLU activation function, which is typically categorized into four main types: a disjunctive formulation \cite{balas_disjunctive_1979}, a Big-M formulation \cite{bunel_unified_2018}, a strong formulation \cite{anderson_strong_2020}, and a partition-based formulation \cite{tsay_partition-based_2021}. Despite offering a strong LP relaxation of the ReLU, the strong formulation requires an infinite number of constraints or a significant number of additional auxiliary variables, which hinders its performances. Here, we adopt the Big-M formulation for its simplicity and excellent empirical performance \cite{alcantara_neural_2023, kenefake_novel_2023, zhou_integrating_2023, favaro_neural_2024}.

First, the Big-M formulation of the ReLU defines the convex hull of the function as follows:
\begin{flalign}
    \label{eq: yn positive}
        y_n &\geq 0, \\[2pt]
    \label{eq: y positive}
        y_n &\geq \hat{y}_n,
\end{flalign}
where \(\hat y_n\), the preactivation value, is the input of the ReLU function and \(y_n\) is the output. The constraints \eqref{eq: ymax} and \eqref{eq: ymin} are alternatively binding depending on the value of the binary \(\sigma_n\). The parameters \(\hat{Y}_n^{\max}\) and \(\hat{Y}_n^{\min}\) are the Big-M constants. When \(\sigma_n=0\), \eqref{eq: ymax} is binding, imposing \(y_n\) to be null with \eqref{eq: yn positive}. Variable \(\hat y_n\) must be negative \eqref{eq: y positive}. In contrast, \(\sigma_n=1\) makes \eqref{eq: ymin} binding, imposing \(\hat y_n=y_n\) via \eqref{eq: y positive}. Variable \(\hat y_n\) must be positive \eqref{eq: yn positive}.
\begin{flalign}
    \label{eq: ymax}
        y_n &\le \hat{Y}_n^{\max} \cdot \sigma_n\\[2pt]
    \label{eq: ymin}
        y_n &\le \hat{y}_n - \hat{Y}_n^{\min} \cdot (1 - \sigma_n)
\end{flalign}

\subsection{Improved tightness of the NN reformulation}
\label{sub: tightness of the NN}
The Big-M constants \(\hat{Y}_n^{\min}\) and \(\hat{Y}_n^{\max}\) need to be accurately determined to produce a tight formulation of the ReLU. Three approaches exist in the literature. A first naive approach is to record the minimum and maximum values of \(\hat y_n\) during training. Although this method preserves the correlation between the NN inputs and remains simple, it is not well suited for NN-constrained optimization since physical constraints rather than the distribution of the NN training dataset should define the feasible domain \cite{favaro_neural_2024}. More advanced methods formulate optimization problems to find the bounds \(\hat{Y}_n^{\min}\) and \(\hat{Y}_n^{\max}\) \cite{tjeng_evaluating_2019, kenefake_novel_2023}. Bound optimization is computationally cumbersome, making it impractical for online computations. The third approach is to calculate the bounds in all layers of the NN based solely on the input bounds using interval analysis \cite{moore_introduction_2009}. Interval analysis enables efficient computation of bounds while ensuring physical consistency by appropriately setting the limits of the NN inputs. However, the correlation between inputs is lost.

In this work, the NN parameters \(\theta\) are updated at each gradient descent during training via DFL (cf. Section \ref{sub: DFL}), such that the bounds \(\hat{Y}_n^{\min}\) and \(\hat{Y}_n^{\max}\) must be recomputed after each gradient descent step. To handle this efficiently, we use interval analysis, and further refine the method by setting the feasible interval of NN input to singleton (i.e., degenerate) interval if the input value is known prior to solving the optimization problem. This applies to all optimization parameters that are input to the NN.

\begin{figure*}[t]
    \centering
    \begin{tikzpicture}[
        every node/.style={draw=none, font=\normalsize},
        align=center
    ]
    
        \def\seph{5cm} 
        \def\sepv{0.5cm} 
        \def\level{0cm} 

        \node[] (X) at (0, \level+\sepv) {\(x\)\\\(\in [X^{\rm min},X^{\rm max}]\)};
        \node[] (P) at (0, \level-\sepv) {\(p\)\\\(\in [P^{\rm min},P^{\rm max}]\)};

        \node[draw, rectangle, label=above:Preactivation] (PA) at (\seph, \level) {\(w_x x + w_p p + b\)};

        \node[draw, circle, label=above:ReLU] (ReLU) at (2*\seph, \level) {\(\max (0, \hat y) \)};

        \node[] (O) at (2.9*\seph, \level) {\(y \in \Delta Y \)}; 

        \draw[->] (X) -- (PA.west) node[midway,above]{\(w_x\)};
        \draw[->] (P) -- (PA.west) node[midway,below]{\(w_p\)};
        \draw[->] (PA) -- (ReLU) node[midway,above]{\(\hat y \in \Delta \hat Y\)}; 
        \draw[->] (ReLU) -- (O);
    \end{tikzpicture}
    \caption{Interval analysis of the bounds for a ReLU neuron with inputs \(p\) and \(x\), where \(p\) is a parameter and \(x\) a variable in the overarching optimization.}
    \label{fig: bounds}
\end{figure*}

\begin{theorem}[Preactivation Range Reduction]
    Consider a neuron---reformulated as mixed-integer linear constraints using Big-M constants---with preactivation \(\hat y=w_x x + w_p p + b\), where \(x \in \mathbb{R}^X\) are unknown input variables with weights \(w^x \in \mathbb{R}^{1 \times X}\), \(p \in \mathbb{R}^{P}\) are known parameters with weights \(w^p \in \mathbb{R}^{1 \times P}\), and \(b\) is the neuron bias.
    Assume that \(x\) and \(p\) are bounded such that \(x \in [X^{\rm min}, X^{\rm max}]^X\), and \(p \in [P^{\rm min}, P^{\rm max}]^P\). Let \(\Delta \hat Y\) denote the feasible range of the preactivation under these bounds. When the feasible range of \(p\) is reduced to a degenerated interval \(p \in [P_0, P_0]^P\), the feasible range of the modified preactivation \(\hat y'=w_x x + w_p P_0 + b\) satisfies:
    \begin{equation}
    \label{eq: interval ratio general app}
    \frac{|\Delta \hat Y'|}{|\Delta \hat Y|} \approx  1 - \frac{\sum_p^P w_p}{\sum_x ^X w_x + \sum_p^P w_p}
    \end{equation}
    where \(|\cdot|\) denotes the Lebesgue measure (i.e., length) of the interval.
\end{theorem}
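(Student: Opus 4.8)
The plan is to treat the feasible range $\Delta\hat Y$ entirely through interval arithmetic, exploiting the fact that $\hat y$ is an affine function of the stacked input $(x,p)$ over an axis-aligned box. Since the image of a connected box under an affine scalar map is again an interval, $\Delta\hat Y$ is genuinely an interval, so its Lebesgue measure is just its length $\hat y^{\max}-\hat y^{\min}$. First I would invoke the standard Minkowski-addition (interval-arithmetic) identity: for $\hat y=\sum_i w_i z_i + b$ with each coordinate $z_i\in[a_i,b_i]$, the width of the image interval is
\[
|\Delta\hat Y|=\sum_i |w_i|\,(b_i-a_i),
\]
because each term $w_i z_i$ sweeps an interval of length $|w_i|(b_i-a_i)$ and independent intervals add their lengths under Minkowski summation. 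The bias $b$ only translates the interval and thus does not affect its length.

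Applying this to the two cases is then immediate. With both $x\in[X^{\min},X^{\max}]^X$ and $p\in[P^{\min},P^{\max}]^P$ free, each $x$-coordinate contributes $|w_{x,i}|(X^{\max}-X^{\min})$ and each $p$-coordinate contributes $|w_{p,j}|(P^{\max}-P^{\min})$, so
\[
|\Delta\hat Y|=(X^{\max}-X^{\min})\sum_{i=1}^{X}|w_{x,i}|+(P^{\max}-P^{\min})\sum_{j=1}^{P}|w_{p,j}|.
\]
Collapsing $p$ to the singleton $[P_0,P_0]^P$ sets every $p$-interval length to zero, leaving only the $x$-terms:
\[
|\Delta\hat Y'|=(X^{\max}-X^{\min})\sum_{i=1}^{X}|w_{x,i}|.
\]
Dividing yields the \emph{exact} ratio, in which the common factor $(X^{\max}-X^{\min})$ appears in both numerator and denominator, while $(P^{\max}-P^{\min})$ weights the second denominator term.

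Finally I would pass from this exact ratio to the stated formula through two reductions, which is precisely where the approximation symbol enters and where the only real subtlety lies. The first is a scaling assumption: the NN inputs are normalized to a common range, so $X^{\max}-X^{\min}\approx P^{\max}-P^{\min}$ and these factors cancel between the two groups of terms. The second replaces $\sum|w|$ by $\sum w$, i.e.\ treats the signed weight sums as proxies for the absolute-weight sums; this is exact when the relevant weights share a sign and is otherwise the source of the $\approx$. Under these two reductions the exact ratio $\sum_{i}|w_{x,i}|\,/\,(\sum_{i}|w_{x,i}|+\sum_{j}|w_{p,j}|)$ becomes
\[
\frac{\sum_x^X w_x}{\sum_x^X w_x+\sum_p^P w_p}=1-\frac{\sum_p^P w_p}{\sum_x^X w_x+\sum_p^P w_p},
\]
which is the claim. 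The main obstacle is therefore not the interval computation---that part is routine---but cleanly articulating and justifying these two approximations so that the heuristic $\approx$ is defensible; I would state the normalization hypothesis explicitly and flag the sign assumption on the weights as the term being absorbed into $\approx$.
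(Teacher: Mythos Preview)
Your argument is correct and follows essentially the same route as the paper: compute the preactivation interval length via interval arithmetic, collapse the parameter interval, take the ratio, and invoke the input-normalization assumption $|\Delta X|\approx|\Delta P|$. The only cosmetic differences are that the paper first treats the scalar case ($X=P=1$) and then extends to the vector setting, and it handles signs by assuming $w_x,w_p>0$ ``without loss of generality'' (relegating the negative-weight swap to a footnote), whereas you work with $\sum|w|$ from the outset and make the passage $\sum|w|\to\sum w$ an explicit part of the $\approx$; your presentation is arguably cleaner, but the substance is identical.
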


\begin{proof}
Assuming we have a ReLU neuron with two inputs \(x \in \mathbb{R}\) and \(p \in \mathbb{R}\) that are respectively variable and parameter of the optimization problem, as illustrated by \autoref{fig: bounds}, we can demonstrate the tightening of the bounds by our approach. Let \(\Delta(\cdot)\) denote the feasible interval of its argument. If
\begin{align}
    x \in \Delta X = [X^{\rm min}, X^{\rm max}],\\
    \label{eq: parameter interval}
    p \in \Delta P = [P^{\rm min}, P^{\rm max}],
\end{align}
then, assuming without loss of generality that the associated weights \(w_x \in \mathbb{R}\) and \(w_p \in \mathbb{R}\) are positive\footnote{If a weight is negative, the lower bound of the input associated to that weight must be used in \eqref{eq: y min} and the upper bound in \eqref{eq: y max}.}, interval analysis defines the bounds of the interval of \(\hat y\) as follows
\begin{flalign}
    \label{eq: y min}
    \hat Y^{\rm min} &= w_x X^{\rm min} + w_p P^{\rm min} + b,\\
    \label{eq: y max}
    \hat Y^{\rm max} &= w_x X^{\rm max} + w_p P^{\rm max} + b.
\end{flalign}
Therefore, the interval length according to Lebesgue measure \(|\cdot|\) on \(\hat y\) is
\begin{flalign}
    |\Delta \hat Y| &= \hat Y^{\rm max} - \hat Y^{\rm min} \\
    \label{eq: y interval length}
    &= w_x  |\Delta X| + w_p |\Delta P|.
\end{flalign}
By degenerating the parameter interval \eqref{eq: parameter interval} as
\(p \in [P_0, P_0]\), where \(P_0\) is the parameter value, equation \eqref{eq: y interval length} becomes
\begin{equation}
    |\Delta \hat Y'| = w_x |\Delta X|.
\end{equation}
The ratio of the interval lengths is
\begin{flalign}
    \frac{|\Delta \hat Y'|}{|\Delta \hat Y|} &= \frac{w_x |\Delta X|}{w_x |\Delta X| + w_p |\Delta P|}, \\
    &= 1 - \frac{w_p |\Delta P|}{w_x |\Delta X| + w_p |\Delta P|}
\end{flalign}
Assuming that the inputs have been normalized, which resulted in \(|\Delta X| \approx |\Delta P|\), we have
\begin{equation}
    \label{eq: interval ratio}
    \frac{|\Delta \hat Y'|}{|\Delta \hat Y|} \approx  1 - \frac{w_p}{w_x + w_p}.
\end{equation}
Equation \eqref{eq: interval ratio} can be extended to \(P\) parameter and \(X\) variable inputs as follows:
\begin{equation}
    \frac{|\Delta \hat Y'|}{|\Delta \hat Y|} \approx  1 - \frac{\sum_p^P w_p}{\sum_x ^X w_x + \sum_p^P w_p},
\end{equation}
which proves that the proposed method is at worst equivalent to the state-of-the-art since the weights have been assumed positive. The improvement in the bound tightness depends on the ratio of the sum of parameter input weights and the sum of all weights.
\end{proof}

\noindent The bounds on \(y\) are
\begin{flalign}
    y \in [\max(0, \hat Y^{\rm min}), \max(0, \hat Y^{\rm max})].
\end{flalign}

\section{Case Study}
\label{sec: case study}

We analyze the effectiveness of the proposed method on a realistic office building comprising five zones. We start in Section~\ref{sec_data} by describing the high-quality and publicly available building model, datasets, and tools used to design the case study, ensuring a fair comparison of results and reproducibility. 
Then, Section~\ref{sec_benchmarks} presents the different benchmarks to compare the performance of our DFL-based training procedure.
In Section~\ref{sec_DFL_results}, we report the performance of all the models learned in a DFL-fashion.  
Finally, in Section~\ref{sec_ITO_results}, we further compare our DFL approach to the corresponding Identify-Then-Optimize (ITO) baseline, where the building’s thermal dynamics are learned using a standard statistical loss function independent of the optimization task.
Our code is publicly available at \href{https://github.com/PSMRB/dfl_hvac_management}{https://github.com/PSMRB/dfl\_hvac\_management}.

\subsection{Data}\label{sec_data}
The US Department of Energy has developed EnergyPlus, a high-fidelity physics-based simulator for building energy modeling, continuously updated since 2001~\cite{crawley_energyplus_2001}. EnergyPlus includes publicly available building models \cite{doe_commercial_2023}. We selected an office building with five actively controlled thermal zones. The zones are located over one floor of 511~m\textsuperscript{2}. \autoref{fig: building description} provides an illustration of the floor layout.
\begin{figure}[t]
    \centering
        \centering
        \begin{tikzpicture}[
            node distance=1cm and 1.5cm,
            every node/.style={rectangle, draw, font=\fontsize{9.5pt}{12pt}\selectfont},
            align=center]
    
            \def\dd{4cm}  
    
            \node[minimum width=1.6*\dd, minimum height=1*\dd, thick] (OW) at (0, 0) {};

            \node[minimum width=1*\dd, minimum height=0.5*\dd, dashed,
            ] (CZ) at (0, 0) {};

            \node[minimum width=1.01*\dd, minimum height=0.6*\dd, draw=none,
                    label=above:{Perimeter Zone 1, 113 m\textsuperscript{2}},
                    label=left:{Per.\\Zone 2\\67 m\textsuperscript{2}},
                    label=below:{Perimeter Zone 3, 113 m\textsuperscript{2}},
                    label=right:{Per.\\Zone 4\\67 m\textsuperscript{2}},
            ] at (0, 0.012*\dd) {Core Zone\\150 m\textsuperscript{2}};

            \draw[dashed] (OW.north east) -- (CZ.north east);  
            \draw[dashed] (OW.north west) -- (CZ.north west);  
            \draw[dashed] (OW.south west) -- (CZ.south west);  
            \draw[dashed] (OW.south east) -- (CZ.south east);  
        \end{tikzpicture}
    \caption{Layout of the building floor.}
    \label{fig: building description}
\end{figure}
Each zone is equipped with its own air-to-air heat pump, such that the HVAC system of each zone is fully independent. The only alteration to the building model is the replacement of the gas-fired heating coil by an electric coil of equal power.

The building is assumed to be located in Denver, Colorado. To model building thermal dynamics, we utilize two typical meteorological datasets, each representing a full year of hourly weather data for Denver International Airport, derived from long-term observations. The first dataset is used to simulate one year of building operation using EnergyPlus’s default heuristic control (i.e., without optimization), while the second provides weather scenarios used to evaluate optimized scheduling strategies.

Optimizing then simulating daily schedules for all 365 days is computationally demanding. Moreover, there exist days with similar weather that bring very little extra information to the learning process. Therefore, we apply \(k\)-medoids clustering to group similar weather conditions from the second dataset. The \(k\)-medoids algorithm is chosen over \(k\)-means to prevent the generation of synthetic average data. As ambient temperature is the only meteorological input required for the model \eqref{eq: NN for thermodynamics}, we use it as the primary clustering criterion. Initially, we select three fixed medoids corresponding to the most extreme conditions: the coldest day, the hottest day, and the day with the highest temperature variability. We then determine seven additional medoids to partition the rest of the dataset. The resulting temperature profiles from the \(k\)-medoids clustering are illustrated in \autoref{fig: medoids}, with the cluster means and standard deviations summarized in \autoref{fig: clusters}. In total, only ten carefully selected days are used to train the model that should generalize over the whole year.
\begin{figure}[t]
    \centering
    \begin{subfigure}[t]{0.48\textwidth}
        \centering
        \includegraphics[scale=0.20]{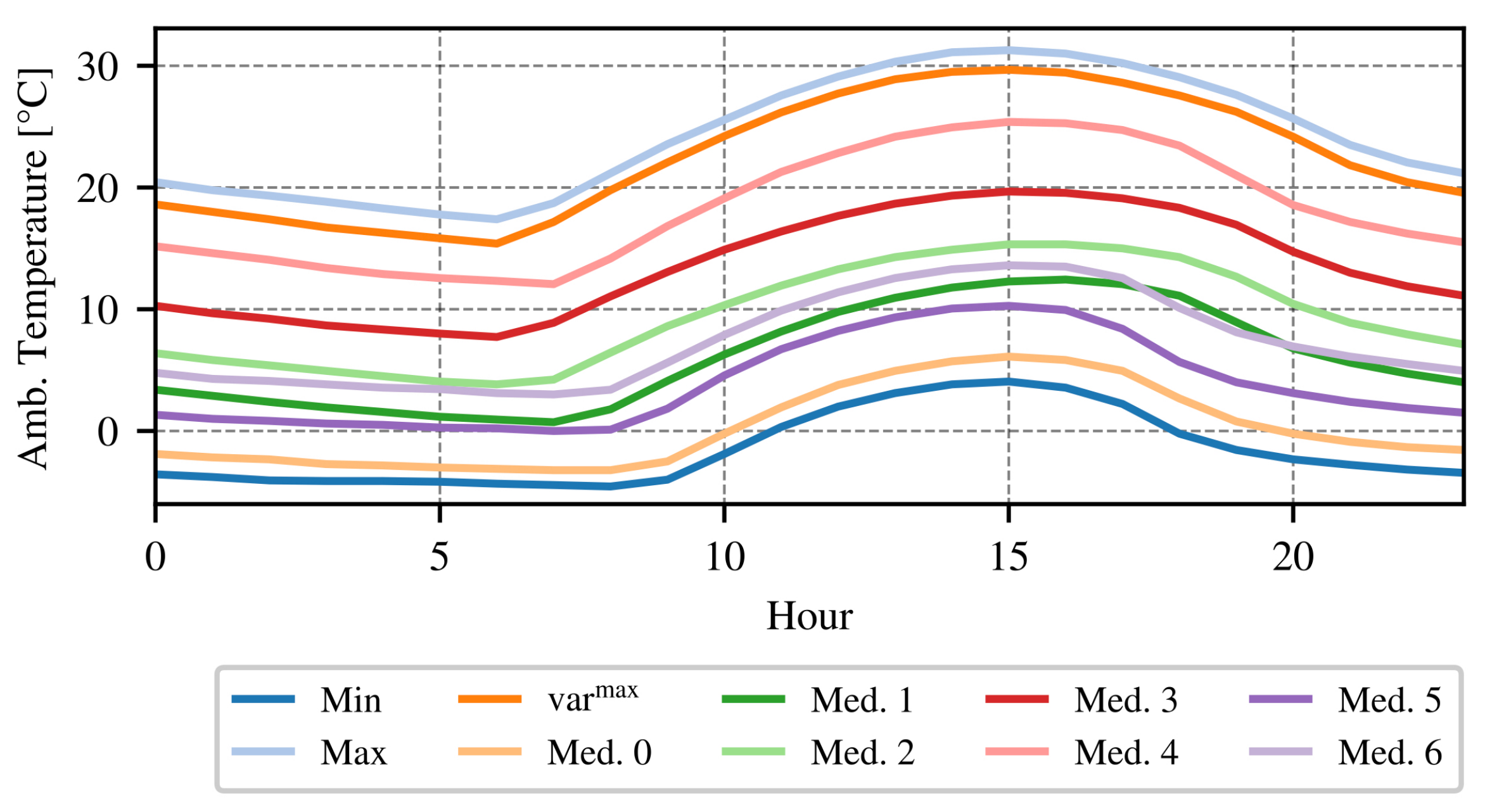}
        \caption{Ambient temperature profiles of the medoids.}
        \label{fig: medoids}
    \end{subfigure}%
    \\[1em]
    \begin{subfigure}[t]{0.48\textwidth}
        \centering
        \includegraphics[scale=0.8]{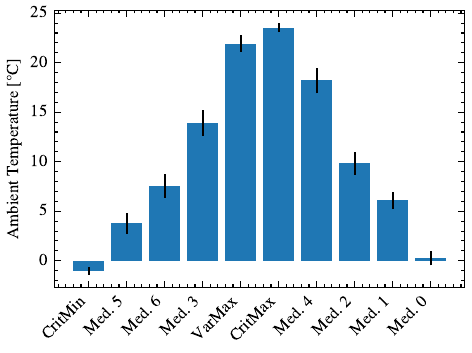}
        \caption{Mean and standard deviation of the medoids ordered to form a smooth cycle.}
        \label{fig: clusters}
    \end{subfigure}
    \caption{Cluster analysis.}
\end{figure}

To reflect typical load patterns and incentivize off-peak usage, we adopt a time-of-use electricity tariff: 0.6 \$/kWh from 6~a.m. to 7~p.m., the rate is double the base rate at 0.6 \$/kWh, and 0.3 \$/kWh during off-peak hours.


\subsection{Models and Benchmarks}
\label{sec_benchmarks}
Our goal is to learn the parameters \(\theta\) of a piecewise linear NN, i.e., constraint \eqref{eq: NN for thermodynamics}, representing the thermal model of the building. We learn the parameters in a decision-focused fashion using gradient descent. To that end, we coded in \textit{Python} and used \textit{PyTorch} for computing the backpropagation. All computations were performed on an
Intel Skylake 16-core Xeon 6142 processors and 16GB of RAM. To perform the gradient descent, we use \textit{Adam} optimizer with an initial learning rate at \(10^{-3}\) {(except for the RC model where \(10^{-2}\) was found to be better)}. The learning rate decays exponentially with \(\gamma=0.98\). We limit the number of epochs (i.e., the number of times the 10 scenarios are run) to 100 for the NNs, 150 for the RC model, and set an early stopping with a patience of 15.

We pre-train the NNs in a supervised way using the first dataset (i.e., one year of building operation with default heuristic control) to capture the building's main thermal characteristics. The resulting pre-trained NNs are then trained in a DFL fashion to learn additional and more complex relationships. This is achieved using SS.

SS learns the parameters of the NN-based thermal model embedded within an MIQP by introducing controlled random perturbations, which smooth the loss landscape and allow for gradient-based optimization. The resulting gradient is computed using the REINFORCE algorithm which enables bypassing the differentiation of the HVAC MS and EnergyPlus. {We take a single sample \(S = 1\) for the Monte Carlo approximation following conventional practice in the literature \cite{silvestri_score_2024} and allowing practical implementation with a real-life system where only one outcome can be observed. We investigated five Gaussian noises of various intensity with standard deviations of 0.01, 0.05, 0.1, \(\mu/10\), \(\mu/2\) to perturb the parameters. The value of \(\sigma\) can be fixed or variable.}

Our proposed DFL-based approach using SS is compared to other state-of-the-art DFL-based methods. These include (\textit{i}) a QP relaxation of the original optimization problem where the binaries are relaxed as continuous variables between 0 and 1 and (\textit{ii}) a two-step approach that first fixes the integer variables, then solves the resulting continuous and differentiable QP problem.
\textit{Cvxpylayer} computes the gradient of QP problems that are solved with \textit{ECOS} since it has the best performance for conic programs \cite{wahdany_more_2023}. {However, the MIQP is solved using \textit{Gurobi}.}{}

Because EnergyPlus is not differentiable, the QP relaxation and the two-step approach cannot directly optimize the ex-post value \eqref{eq: task loss}. Therefore, \cite{favaro_decision-focused_2025} introduced a \textit{supervized DFL loss} that bypasses the need to differentiate EnergyPlus (or the real installation):
\begin{flalign}
    \label{eq: loss function}
    \mathcal{L}_{\rm DFL} = \sum_{t=0}^T \frac{\check{\lambda}_t}{T}
    \left( \vphantom{\sum_{z=0}^Z {\rm MAE}}
     |p_{t}^{\rm hvac} - \check{p}_{t}^{\rm hvac}|
    + {\rm MAE}\left( p_{t, z}^{\rm hvac}\right) \right),
\end{flalign}
where \(p_{t}^{\rm hvac}\) is the total HVAC consumption of the building at time step \(t\), and \({\rm MAE}\left( p_{t, z}^{\rm hvac}\right)=\frac{1}{Z}\sum_{z=0}^Z|p_{t, z}^{\rm hvac} - \check{p}_{t, z}^{\rm hvac}|\).

Regarding the NN architectures, we learn the parameters of three fully connected feedforward NNs with six inputs: the temperature in each zone (5 inputs) and the outdoor temperature. The inputs are normalized. The three architectures have one hidden layer with 2 (NN1), 5 (NN2), and 10 ReLU (NN3). The last layer of all architectures is a linear layer with five outputs: the zonal temperatures at the next time step. Each combination of architecture and DFL method is run over five seeds, and the best outcome is reported.

Finally, we also compare the NNs to a lumped RC model, a linear model of the building based on the aggregated resistance and capacitance of each zone. For each zone, the RC model turns constraint \eqref{eq: NN for thermodynamics} into:
\begin{equation}
    \tau_{z, t+1}^{\rm in} = \tau_{z, t}^{\rm in} + \left(\frac{\eta_z^{\rm h} p_{z,t}^{\rm h} - \eta_z^{\rm c} p_{z,t}^{\rm c}}{C_z}  + \frac{\tau_{z,t}^{\rm amb} - \tau_{z,t}^{\rm in}}{R_z C_z} \right) \Delta t.
\end{equation}

\autoref{tab: hyperparameters} gathers the hyperparameters used in this case study. For more detail, you can visit the \href{https://github.com/PSMRB/dfl_hvac_management}{GitHub associated with the paper}.
\begin{table}[thb]
    \centering
    \caption{Hyperparameters.}
    \label{tab: hyperparameters}
    \begin{tabular}{l r}
        \toprule
        Optimizer & Adam \\
        Learning Rate & \(10^{-3}\) for NN \\
        & \(0.02\) for RC \\
        Exponential Decay & 0.98 \\
        Max. Nb. Epoch & 100 (150 for RC) \\
        Patience & 15 \\
        SS Noise (\(\sigma\)) & 0.01, 0.05, 0.1, $\mu$/10, $\mu$/10 \\
        S & 1 \\
        NN1 & 1 layer, 2 ReLU \\
        NN2 & 1 layer, 5 ReLU \\
        NN3 & 1 layer, 10 ReLU \\
        \bottomrule
    \end{tabular}
\end{table}

\subsection{DFL Results}\label{sec_DFL_results}
\label{sub: dfl results}

\begin{table*}[!t]
\centering
\caption{Results of the decision-focused learning models: NN with one layer of two ReLU (NN1), five ReLU (NN2), ten ReLU (NN3), and one multi-zone RC model (RC). Three methods make the MIQP problem differentiable: fixing the binaries to their optimal value (FB), relaxing the integer constraint (QP), or applying a stochatic smoothing (SS).}
\label{tab: dfl results}
\begin{tabular}{lcccccccc}
\toprule
 & \multicolumn{1}{c}{\multirow{2}{*}{RC}} & \multicolumn{3}{c}{NN1} & \multicolumn{3}{c}{NN2} & NN3  \\
 &  & \multicolumn{1}{c}{SS} & \multicolumn{1}{c}{QP} & \multicolumn{1}{c}{FB} & \multicolumn{1}{c}{SS} & \multicolumn{1}{c}{QP} & \multicolumn{1}{c}{FB} & \multicolumn{1}{c}{QP} \\ \cmidrule[0.6pt](lr){2-2} \cmidrule[0.6pt](lr){3-5} \cmidrule[0.6pt](lr){6-8} \cmidrule[0.6pt](lr){9-9}
Ex-post+ (\$) & 252 & 94 & 130 & 113 & 84 & 182 & 101 & 125 \\
Hierarchical loss & 24.7 & 20.4 & 14.3 & 12.8 & 22.2 & 12.4 & 11.4 & 13.2 \\[0.5em]
MAE (kW) & 0.6 & 0.52 & 0.55 & 0.55 & 0.5 & 0.54 & 0.54 & 0.54 \\
MSE (kW2) & 0.44 & 0.35 & 0.37 & 0.39 & 0.34 & 0.39 & 0.39 & 0.39 \\
Error mean (kW) & -0.07 & -0.08 & 0.07 & 0.04 & -0.01 & 0.02 & 0.01 & 0.03 \\
Error std (kW) & 0.39 & 0.42 & 0.37 & 0.36 & 0.45 & 0.37 & 0.39 & 0.36 \\[0.5em]
Expected cost (\$) & 32.3 & 37.9 & 45.8 & 45.8 & 39.9 & 47.4 & 43.1 & 45.4 \\
Ex-post cost (\$) & 38.0 & 41.5 & 41.1 & 41.0 & 41.8 & 41.2 & 42.7 & 43.2 \\
Cost error (\$) & 5.74 & 3.64 & -4.69 & -4.74 & 1.95 & -6.17 & -0.43 & -2.22 \\
Temp. Penalty(\$) & 162 & 31 & 57 & 41 & 35 & 59 & 47 & 58 \\[0.5em]
Nb. Epochs & 150 & 71 & 36 & 54 & 43 & 25 & 94 & 100 \\
Training time & 03:30:24 & 01:08:34 & 00:48:59 & 00:56:49 & 04:17:05 & 00:49:33 & 07:19:17 & 04:25:18 \\
Validation time & 03:14:32 & 01:07:13 & 00:44:01 & 00:52:38 & 04:22:52 & 00:59:21 & 06:16:23 & 03:54:35 \footnotemark \\
Test time & 00:00:46 & 00:00:49 & 00:01:51 & 00:00:57 & 00:07:59 & 00:01:57 & 00:01:05 & 00:10:13 \\ \bottomrule
\multicolumn{9}{l}{\small \(\vphantom{\frac{\sum_0^0}{1}}\) \(^4\)Validation conducted on the continuous relaxation (i.e., QP-relaxation) of the problem.}
\end{tabular}
\end{table*}

\autoref{tab: dfl results} reports the results of the various DFL-based models. For the two smaller NN architectures (NN1 \& NN2), the three DFL methods---i.e., SS, QP-relaxation (QP), and QP subproblem with fixed binaries (FB)---converge successfully. For the most complex architecture, NN3, only QP training is feasible within 24 hours. Indeed, the complexity of the HVAC MS increases with the number of neurons. Consequently, since the SS and FB require to solve the MIQP for each sample, the training time quickly becomes computationally intensive.

The first two rows, \textit{Ex-post+} and Hierarchical Loss, are the metrics used for training. SS can directly minimize \textit{Ex-post+} (i.e., ex-post power cost, thermal discomfort penalty, and power cost misestimation), whereas QP and FB methods minimize the error on the HVAC power through the hierarchical loss \eqref{eq: loss function}. 

SS obtains the best ex-post value compared to QP, and FB.
For NN1, SS (\(\sigma=0.01\)) obtains the lowest \textit{Ex-post+} at \$94. FB and QP follow at \$113 and \$130, respectively. All three NN1 models have an ex-post cost between \$41 and \$41.5. However, the model accuracy to predict the cost varies. SS is the most accurate model with a daily cost underestimation of \$3.7. QP and FB follow with an overestimation of \$4.69 and \$4.74, respectively.
Regarding the temperature penalties, SS comes first with only \$31 of daily penalty compared to \$41 for FB, and \$57 for QP. The total computational time (training, validation, and test) ranges from 95 minutes for QP to 136 minutes for SS.
\newline
For NN2, the same ranking is observed as for NN1. SS (\(\sigma=0.01\)) outperforms with an \textit{Ex-post+} value at \$84. This is even better than \$94 for NN1. FB and QP follow with \$101 (\$113 for NN1) and \$182 (\$130 for NN1). The ex-post cost ranges from \$41.8 for SS to \$42.7 for FB. The cost estimation error is \$1.95 for SS, but rises to \$6.17 for QP. The temperature penalty goes from \$35 for SS to \$59 and \$77 for QP and FB, respectively. In conclusion, SS offers for both architectures the best thermal comfort and with excellent estimate of the power cost along with very competitive ex-post cost and computational burden.

FB outperforms QP on almost all training and ex-post metrics. For NN1, FB has a better hierarchical loss (12.8) and \textit{Ex-post+} (\$113) than QP (hierarchical loss at 14.3 and \textit{Ex-post+} at \$130). Even though FB has a similar ex-post cost (\$41.0 versus \$41.1 for QP) and cost error (\$4.74 versus \$4.69 for QP), FB provides more thermal comfort (temperature penalty at \$41 versus \$57 for QP). FB and QP take advantage of NN2’s stronger modeling abilities to improve their training loss (FB: 11.4 vs. 12.8; QP: 12.4 vs. 14.3)). However, FB \textit{Ex-post+} falls to \$101 reflecting better ex-post metrics whereas it increases to \$182 for QP. These results show (\textit{i}) the inherent misalignment between the hierarchical loss and the ex-post value; (\textit{ii}) the poor quality of the gradient obtained by relaxing the problem.

The poor gradient approximation generated by QP even undermines the performance of larger architectures such as NN3, causing them to underperform relative to smaller SS-trained architectures. Note that for NN3, the QP relaxation is also used for validation. Since the complexity of the HVAC MS grows with the number of neurons, SS and FB become too cumbersome. Indeed, FB and SS require to solve the MIQP at each sample. QP hierarchical loss is worse for NN3 than for NN2 going up from 12.4 to 13.2, while \textit{Ex-post+} goes down from \$182 to \$125. The ex-post cost stands at \$43.2, slightly above the usual range. The cost overestimation (\$2.22) and the temperature penalty (\$58) become competitive. However, such a large NN with ten ReLU is long to train (almost four hours) and is still outperformed in terms of \textit{Ex-post+} value by NN1 trained by SS. Moreover, solving the MIQP with NN3 at test time is about ten times longer than with NN1.

In contrast, since the RC model is linear, the resulting HVAC MS is inherently a continuous QP problem. Nevertheless, the RC model cannot capture the complexity of the thermal dynamics and exhibits the worst metrics among all models. The RC model is trained to minimize the hierarchical loss, which converges at 24.7 and \textit{Ex-post+} at \$252. Regarding the ex-post metrics, the average daily ex-post cost of HVAC power is \$38.0, but it is largely underestimated at \$32.3. The average daily temperature penalty, reflecting the thermal discomfort, is \$150. Therefore, the ex-post cost is the lowest of all models because the RC model fails to provide comfortable temperatures. The total computational time for training, validation, and test is about seven hours.

In conclusion, SS outperforms DFL approaches that differentiate through the optimization problem (QP or FB), particularly when the underlying system or its simulator is non-differentiable. Unlike QP and FB, SS bypasses both the differentiation of the optimization layer and the system model, allowing the direct use of the true ex-post value (\textit{Ex-post+}) as a training signal instead of relying on surrogate task losses. This results in superior gradient estimates and more informative training loss, enabling smaller models to outperform larger QP-trained counterparts while reducing computational overhead at inference.

\subsection{ITO Results}\label{sec_ITO_results}

\begin{table*}[t]
\centering
\caption{Comparison of the performance of the Identify-Then-Optimize approach to decision-focused learning via Stochastic Smoothing (SS) for each model---NN with one layer of two ReLU neurons (NN1), five ReLU neurons (NN2), ten ReLU neurons (NN3), and one multi-zone RC model (RC).}
\label{tab: ito results}
\begin{tabular}{lcccccccc}
\toprule
 & \multicolumn{2}{c}{RC} & \multicolumn{2}{c}{NN1} & \multicolumn{2}{c}{NN2} & \multicolumn{2}{c}{NN3} \\
 & \multicolumn{1}{c}{ITO} & \multicolumn{1}{c}{DFL} & \multicolumn{1}{c}{ITO} & \multicolumn{1}{c}{SS} & \multicolumn{1}{c}{ITO} & \multicolumn{1}{c}{SS} & \multicolumn{1}{c}{ITO} & \multicolumn{1}{c}{QP} \\ \cmidrule[0.6pt](lr){2-3} \cmidrule[0.6pt](lr){4-5} \cmidrule[0.6pt](lr){6-7} \cmidrule[0.6pt](lr){8-9}
Ex-post+ (\$) & 516 & 252 & 318 & 94 & 495 & 84 & 579 & 125 \\
Hierarchical loss & 54.2 & 24.7 & 35.2 & 20.4 & 43.7 & 22.2 & 39.4 & 13.2 \\[0.5em]
MAE (kW) & 0.59 & 0.6 & 0.5 & 0.52 & 0.51 & 0.5 & 0.53 & 0.54 \\
MSE (kW2) & 0.45 & 0.44 & 0.34 & 0.35 & 0.36 & 0.34 & 0.38 & 0.39 \\
Error mean (kW) & -0.27 & -0.07 & -0.26 & -0.08 & -0.35 & -0.01 & -0.35 & 0.03 \\
Error std (kW) & 0.55 & 0.39 & 0.38 & 0.42 & 0.38 & 0.45 & 0.35 & 0.36 \\[0.5em]
Expected cost (\$) & 21.3 & 32.3 & 26.8 & 37.9 & 21.7 & 39.9 & 20.5 & 45.4 \\
Ex-post cost (\$) & 41.5 & 38.0 & 41.6 & 41.5 & 42.2 & 41.8 & 42.2 & 43.2 \\
Cost error (\$) & 20.2 & 5.7 & 14.7 & 3.6 & 20.5 & 2.0 & 21.7 & -2.2 \\
Temp. Penalty(\$) & 55 & 162 & 32 & 31 & 17 & 35 & 16 & 58 \\[0.5em]
Nb. Epochs & 0 & 150 & 0 & 71 & 0 & 43 & 0 & 100 \\
Training time & - & 03:30:24 & - & 01:08:34 & - & 04:17:05 & - & 04:25:18 \\
Validation time & - & 03:14:32 & - & 01:07:13 & - & 04:22:52 & - & 03:54:35 \\
Test time & 00:01:12 & 00:00:46 & 00:01:46 & 00:00:49 & 00:02:17 & 00:07:59 & 02:15:50 & 00:10:13 \\ \bottomrule
\end{tabular}
\end{table*}

\autoref{tab: ito results} compares the results of the DFL-trained models with their ITO counterpart. ITO models are trained over one year of historical data by Mean Square Error (MSE) minimization. The trained NN is then reformulated as a constraint of the HVAC MS. The ITO and DFL models are tested over the same ten days.

All models see a significant improvement with DFL compared to ITO.
The three NN models see major improvements in terms of \textit{Ex-post+} and hierarchical loss. \textit{Ex-post+} falls from \$318 to \$94 for NN1, from \$495 to \$84 for NN2, and from \$579 to \$125 for NN3. The cost errors of the NN1 and NN2 ITO models are \$14.7 and \$20.5, compared to \$3.6 and \$2.0 for the SS models. The QP DFL with NN3 reduces the cost error from \$21.7 to \$2.2.
Similarly, DFL of the RC parameters improves the hierarchical loss from 54.2 to 24.7 and \textit{Ex-post+} from \$516 to \$252. It shows the possible improvement DFL can bring even for simple models as already noted in \cite{favaro_decision-focused_2025}. Because of DFL, the RC model can outperform more complex ITO models such as NNs. The \textit{Ex-post+} value of the RC model is \$252, much lower than \$318 for NN1, \$495 for NN2, and \$579 for NN3.

When surveying only ITO models, NN1 is the best architecture, which shows that when ITO is used, a more complex architecture may not yield better decisions. NN1 reports an \textit{Ex-post+} ITO value at \$318 far below the RC model (\$516), NN2 (\$495), and NN3 (\$579).

In conclusion, DFL significantly improves the quality of the ex-post metrics reflecting better and more realistic decisions compared to the conventional two-stage ITO approach. A simple DFL-trained model, such as the RC model, is to be preferred to ITO-trained NNs. Nevertheless, the improvement brought by DFL is limited by the modeling power of the architecture, as shown for RC model, and the quality of gradient approximation, as shown for NN3. Caution is warranted when resorting to ITO: increasing model complexity does not necessarily yield better decisions and may introduce additional computational burden.

\subsection{Effect of Constraint Tightness on Solving Time}
\label{sub: tightness results}

\begin{table}[thb]
    \centering
    \caption{Comparison of the solving time for the optimization during test (i.e., MIQP formulation) with and without the proposed tight formulation.}
    \label{tab: tightness results}
    \addtolength{\tabcolsep}{-0.1em}
    \begin{tabular}{lcccccccr}
    \toprule
     & \multicolumn{3}{c}{NN1} & \multicolumn{3}{c}{NN2} &     NN3 &   \multirow{2}{*}{Avg.} \\
     &    \multicolumn{1}{c}{SS} & \multicolumn{1}{c}{QP} & \multicolumn{1}{c}{FB} & \multicolumn{1}{c}{SS} & \multicolumn{1}{c}{QP} & \multicolumn{1}{c}{FB} & \multicolumn{1}{c}{QP} & \\
    \cmidrule[0.6pt](lr){2-4} \cmidrule[0.6pt](lr){5-7} \cmidrule[0.6pt](lr){8-8} \cmidrule[0.6pt](lr){9-9}
    SOTA     &  39.0 &  23.4 &  23.0 &  41.5 &  72.6 &  211 &  1443 &  264.4 \\
    Tight    &  32.0 &  23.2 &  19.8 &  38.2 &  66.0 &  180 &  1422 &  254.0 \\
    Gain (\%) &  17.9 &   0.9 &  13.9 &   8.0 &   9.1 &   14.7 &     1.5 &    9.4 \\
    \bottomrule
    \end{tabular}
\end{table}

In subsection \ref{sub: tightness of the NN}, we analytically showed how to tighten the MIL equations of a ReLU by dynamically adjusting the feasible interval of its inputs.
In this case study, the NNs have 11 inputs, among which one is a parameter (i.e., the ambient temperature) and ten are decision variables (i.e., the five zonal indoor temperatures and the related HVAC power consumption). Applying \eqref{eq: interval ratio general app} and assuming that the weights have the same magnitude, the expected theoretical improvement in tightness is approximately 9.1\%.

As shown in \autoref{tab: tightness results}, the tight formulation consistently reduces solving time across all test cases, with an average gain of 9.4\%. This validates the theoretical prediction from \eqref{eq: interval ratio general app} and demonstrates the practical advantage of our approach in terms of computational efficiency. Importantly, this improvement is achieved without sacrificing model accuracy or solution quality.

\subsection{Noise intensity analysis}
In this section, we analyze the impact of the noise intensity on the performance of SS-DFL. Since we use a Gaussian noise, we modify \(\sigma\) to investigate five level of noise: 0.01, 0.05, 0.1, \(\mu/10\), and \(\mu/2\). We further enable two modes for \(\sigma\): it can either remain constant or be learned. Note that for \(\mu/10\) and \(\mu/2\), \(\sigma\) is either initialized as a fraction of \(\mu\) and then kept constant, or evolves with \(\mu\). We focus on two key metrics\footnote{The complete tables are available in the \href{https://github.com/PSMRB/dfl_hvac_management/tree/main/output/tables}{repository}.}---the \textit{Ex-post+} loss and the number of epochs necessary to converge---shown in \autoref{fig: std analysis}.

\begin{figure}[htbp]
  \centering
  \includegraphics[width=0.48\textwidth]{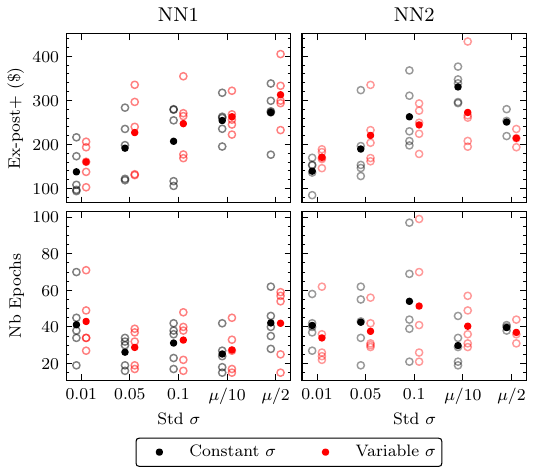}
  
  \caption{Analysis of \textit{Ex-post+} and the number of epochs for various definition of \(\sigma\). The mean of the five trainings is in bold. The \textit{Ex-post+} loss tends to increase with higher noises while no clear trend is observe on the number of epochs indicating little effect on the convergence rate.}
  \label{fig: std analysis}
\end{figure}

The analysis of noise impact on SS-DFL training performance reveals that the \textit{Ex-post+} loss increases with the noise. This is true for both NN1 and NN2, whether \(\sigma\) is constant or variable. Interestingly, setting \(\sigma\) as a fraction of \(\mu\) (i.e., \(\sigma\) is a fraction of the pre-training value found for each parameter) does not yield better results. Allowing \(\sigma\) to be trainable (variable configuration) does not improve the \textit{Ex-post+} loss. Unlike \(\mu\), \(\sigma\) has no clear optimal value. It is a variable without physical reality introduced by SS. Therefore, making \(\sigma\) a parameter to learn complexifies training without a clear gain. Ultimately, it results in a loss of performance.

In both variable and constant configurations, the best performance is achieved with the smallest constant noise (\(\sigma=0.01\)), and initializing all the \(\sigma\)'s at the same value is to be preferred to fraction of \(\mu\).

There is no significant effect of \(\sigma\) on the training convergence speed, as measured by the number of epochs required. Additionally, whether \(\sigma\) is kept constant or treated as a trainable parameter does not appear to influence convergence rates. This behavior is consistent across both NN1 and NN2 architectures.

\subsection{Number of Samples Analysis}
We complement our results by analyzing the impact of the sample size parameter \( S \) on performance, noting that all previous experiments were conducted with \( S = 1 \). The motivation for \( S = 1 \) was threefold. First, it reflects practical deployment conditions in real buildings, where only a single true realization of the building's response is observable at each time step due to the absence of simulators. Second, the score-function (REINFORCE) gradient estimator remains unbiased regardless of \( S \). Increasing \( S \) decreases the variance of the gradient estimate but increases computational cost per training instance, leading to longer training times. Third, this choice aligns with the reinforcement learning literature (and the advice of the SS-DFL authors \cite{silvestri_score_2024}), where policy gradient methods typically use single-sample estimates at each update to balance computational efficiency and estimator variance. Therefore, fixing \( S = 1 \) provides a practical and theoretically sound approach for our decision-focused learning framework.

In \autoref{fig: s analysis}, we can see the \textit{Ex-post+} loss, the training time and the number of epochs for \(S=1, 2, 5, 10\)\footnote{The complete tables are available in the \href{https://github.com/PSMRB/dfl_hvac_management}{repository}.}. All trainings were performed with \(\sigma\) set to its optimal value (i.e., constant \(\sigma=0.01\)).The \textit{Ex-post+} loss tends to increase with more samples for both NN1 and NN2. We believe this is because having more samples reduces the exploration during training. As expected, the number of epochs necessary for the training to converge decreases with \(S\), reflecting the higher quality of the gradient estimation. This decrease is extremely steady for NN1, but less for NN2. Despite the number of epochs going down with \(S\), the training time goes up as expected.

In conclusion, \(S=1\) leads to the best decision quality (i.e., minimal \textit{Ex-post+} loss) and training time.

\begin{figure}
    \centering
    \includegraphics[width=0.48\textwidth]{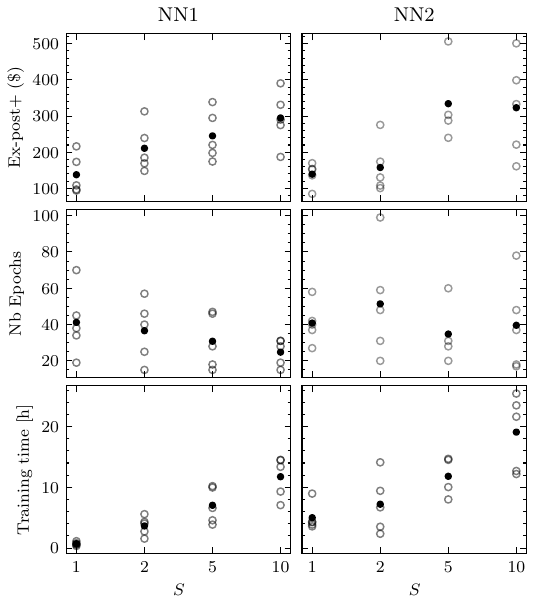}
    \caption{Analysis of \textit{Ex-post+}, the number of epochs, and the training time depending on the number of samples \(S\). The mean of the five trainings is in bold. 
    The \textit{Ex-post+} loss tends to increase with more samples. Despite the number of epochs going down with \(S\), the training time goes up.}
    \label{fig: s analysis}
\end{figure}

\section{Conclusion}
\label{sec: ccl}
We presented an HVAC MS where the building thermal dynamics are modeled using NN. We started by improving the formulation of NN as constraints in optimization problem. Then, we learned the parameters of the NN using DFL. In order to ensure meaningful gradient and training robustness, the HVAC MS is formulated as an MIQP where the thermal comfort is a quadratic penalty in the objective function rather than a hard constraint. Since MIQP are discrete by nature, we employed SS to produce an informative gradient without having to differentiate the MIQP and the building (or its simulator).

We tested our approach on a realistic five-zone building. Results show that SS outperforms the conventional two-stage approach where NN are trained on historical data and then embedded into the optimization. We also showed that DFL with SS is to be preferred to more naive approaches that relax or fix the binaries of the mixed integer problem.

{Further research is required to strengthen both scalability and robustness. On the modeling side, developing optimization formulations for entire neural networks, rather than individual neurons, and generating feasible distributions over large sets of constraint parameters would enhance stochastic smoothing and enable chance-constrained programming to address model uncertainty. On the computational side, software advances, such as parallelization, warm-starting optimization with its closest neighbor, and multi-fidelity simulation, will further reduce computation times and support large-scale deployment.}{}

\ifCLASSOPTIONcaptionsoff
  \newpage
\fi

\bibliographystyle{IEEEtran}
\bibliography{ref}

\begin{IEEEbiography}
[{\includegraphics[width=1in,height=1.25in,clip,keepaspectratio]{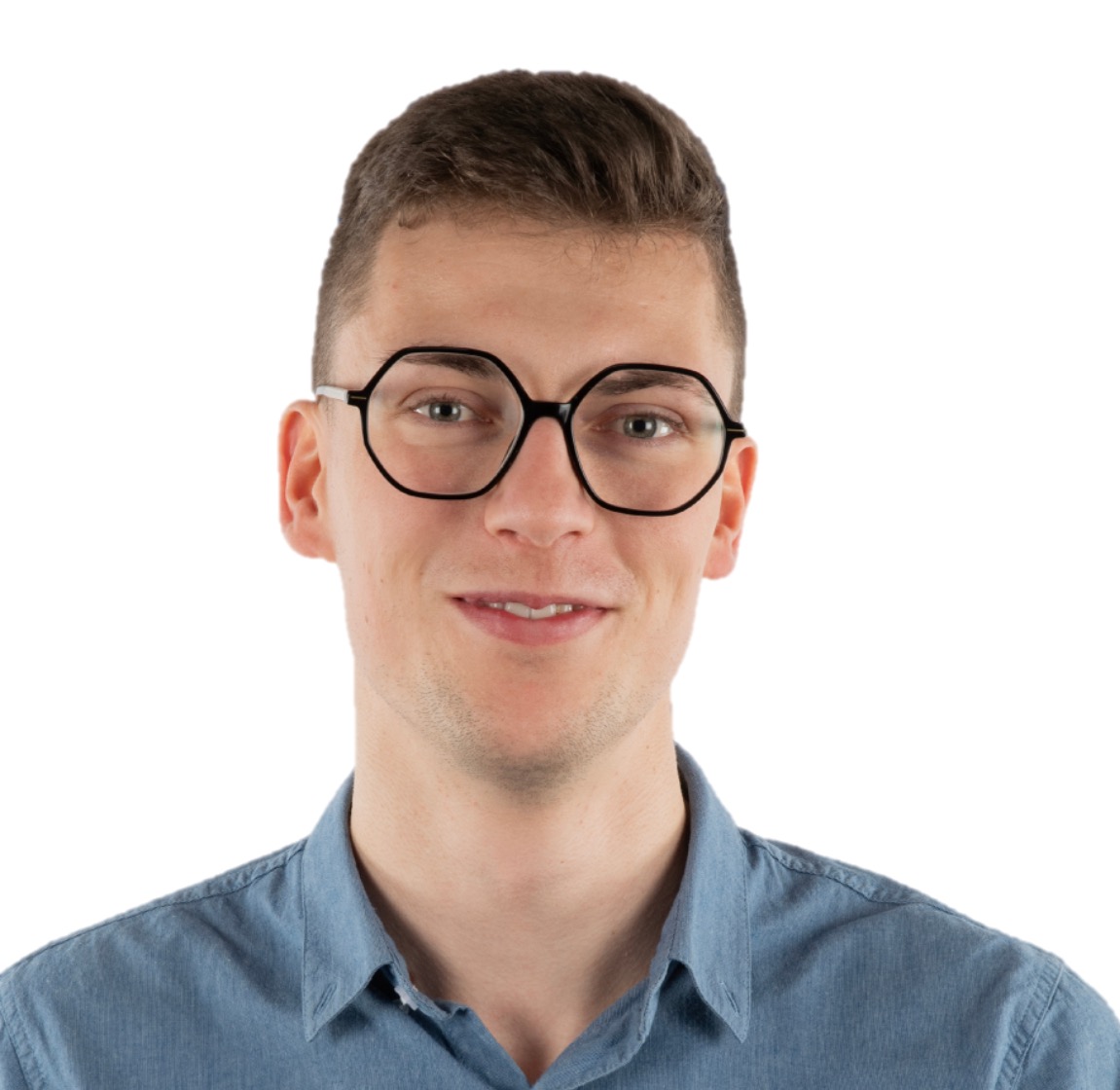}}]{Pietro Favaro}
received his Master’s degree in Electrical Engineering from the University of Mons, Belgium, in 2022, alongside a Master’s degree in Smart Cities and Communities from Heriot-Watt University, Scotland. He is currently pursuing a PhD at the University of Mons as an FRS-FNRS Fellow. In 2023, he was awarded a one-year fellowship from the Belgian-American Educational Foundation he spent at Johns Hopkins University. His research focuses on the scheduling of complex flexible assets, including pumped hydro energy storage and buildings.
\end{IEEEbiography}

\begin{IEEEbiography}[{\includegraphics[width=1in,height=1.25in,clip,keepaspectratio]{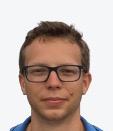}}]{Jean-François Toubeau}
(Member, IEEE) received the degree in civil electrical engineering and the Ph.D. degree in electrical engineering from the University of Mons, Belgium, in 2013 and 2018, respectively, where he is currently a Full-Time Senior Researcher within the “Power Systems and Markets Research Group.” His research mainly focuses on bridging the gap between machine learning and decision-making in modern power systems.
\end{IEEEbiography}


\begin{IEEEbiography}[{\includegraphics[width=1in,height=1.25in,clip,keepaspectratio]{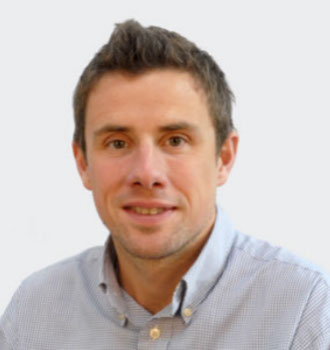}}]{François Vallée} received the degree in civil electrical engineering and the Ph.D. degree in electrical engineering from the Faculty of Engineering, University of Mons, Belgium, in 2003 and 2009, respectively, where he is currently a Full Professor and the Leader of the “Power Systems and Markets Research Group.” His research interests include PV and wind generation modeling for electrical system reliability studies in presence of dispersed generation, and adequacy studies. His Ph.D. work has been awarded the SRBE/KBVE Robert Sinave Award in 2010. He is currently serving as an Associate Editor for IEEE Transactions on Power Systems.
\end{IEEEbiography}

\begin{IEEEbiographynophoto}{Yury Dvorkin}
received his Ph.D. degree in Electrical Engineering from the University of Washington in 2016. He is currently an Associate Professor at Johns Hopkins University. His research focuses on modeling and algorithmic solutions for an efficient, reliable, and resilient energy transition.
\end{IEEEbiographynophoto}



\end{document}